\renewcommand{\vec}[1]{\overrightarrow{#1}}
\def\reals{\mathbb{R}}
\def\f{{\mathbf{f}}}
\def\s{{\mathbf{s}}}
\def\c{{\mathbf{c}}}
\def\Fprop{{\mathbf{F}}}
\def\C{{\mathbf{C}}}
\def\S{{\mathbf{S}}}
\def\t{{\mathbf{t}}}
\newcommand{\resp}{{\operatorname{resp}}}
\def\set#1{{\{ #1 \}}}
\newcommand{\Plays}{\operatorname{Plays}}
\newcommand{\hist}{{\operatorname{hist}}}
\newcommand{\Post}{{\operatorname{Post}}}
\newcommand{\Act}{{\operatorname{Act}}}
\newcommand{\F}{{\mathcal{F}}}
\newcommand{\Dist}{{\operatorname{Dist}}}
\newcommand{\G}{{\mathcal{G}}}
\newcommand{\I}{{\mathcal{I}}}
\newcommand{\U}{{\mathcal{U}}}
\newcommand{\R}{{\mathcal{R}}}
\newcommand{\V}{{\mathcal{V}}}
\newcommand{\T}{{\mathcal{T}}}
\newcommand{\M}{{\mathcal{M}}}
\renewcommand{\phi}{\varphi}
\newcommand{\nature}{\mathtt{Nature}}
\definecolor{darkgreen}{rgb}{0.0, 0.5, 0.0}
\definecolor{darkred}{rgb}{0.9, 0.0, 0.0}
\definecolor{darkblue}{rgb}{0.0, 0.0, 0.9}
\tikzset{
	state/.style={rounded rectangle,draw=black,inner sep=1.5mm,minimum width=9mm,minimum height=5mm},
	rstate/.style={rectangle,draw=black,inner sep=1ex,minimum width=9mm,minimum height=5mm},
	istate/.style={minimum width=5mm, minimum height=6mm},
	bullet/.style={circle,draw=black,fill=black,inner sep=0cm, minimum size=0.7mm},
%	every loop/.style={min distance=7mm,looseness=10, in=-15,out=15},
	initial text={},
	every initial by arrow/.style={->,>=stealth'},
	ptran/.style={rounded corners, ->,>=stealth',auto},
	ntran/.style={rounded corners, -,auto}
}
\title{A Game-Theoretic Account of \\ Responsibility Allocation\thanks{
		This work was funded by DFG grant 389792660 as part of TRR~248 -- CPEC (see https://perspicuous-computing.science), the Cluster of Excellence EXC 2050/1 (CeTI, project ID 390696704, as part of Germany's Excellence Strategy), DFG-projects BA-1679/11-1 and BA-1679/12-1, and the European Research Council under the
		Grant Agreement 610150 (http://www.impact-erc.eu/) (ERC Synergy Grant ImPACT).}}
\author{Christel Baier\inst{1} \and 
	Florian Funke\inst{1}\and
	Rupak Majumdar\inst{2}}
\authorrunning{C. Baier, F. Funke, and R. Majumdar}
\institute{
	Technische Universität Dresden, Dresden, Germany \and
	MPI-SWS, Kaiserslautern, Germany\\
	\email{\{christel.baier, florian.funke\}@tu-dresden.de, rupak@mpi-sws.org}
}
\begin{document}
\maketitle

\begin{abstract} 
When designing or analyzing multi-agent systems, a fundamental problem is \emph{responsibility ascription}:
to specify which agents are responsible
for the joint outcome of their behaviors and to which extent. 
We model strategic multi-agent interaction as an extensive form game of imperfect information and define notions of
forward (prospective) and backward (retrospective) responsibility.
Forward responsibility identifies the responsibility of a group of agents for an outcome along all possible plays, whereas backward responsibility identifies the responsibility along a given play.
We further distinguish between strategic and causal backward responsibility, where the former captures the epistemic knowledge of players along a play, while the latter formalizes which players -- possibly unknowingly -- caused the outcome.
A formal connection between forward and backward notions is established in the case of perfect recall.
We further ascribe \emph{quantitative} responsibility through cooperative game theory.
We show through a number of examples that our approach encompasses several prior formal accounts of responsibility attribution.

\end{abstract}

\section{Introduction}
\label{sec:intro}

The notion of \emph{responsibility} is fundamental in the study of multi-agent interaction.
Allocation of responsibility is a means by which we regulate interactions in society,
by declaring whether an action by a person in an interactive setting should be praised or blamed.
In multi-agent interactions, ascertaining who is to be held responsible and by which degree can be difficult;
thus, there is a need for formal frameworks for responsibility allocation.

We work in the framework of ``folk ethics'' conception of moral responsibility \cite{BrahamVanHees2012,BrahamVanHees2018}.
In this setting, the locus of responsibility resides with the individual rather than a collective and purely on her actions
and their consequences rather than identities, attitudes, norms, or values.
A person is ascribed responsibility for a given outcome if three conditions are met.
First, the person has \emph{agency}: they are able to plan and act intentionally and can distinguish
the setting of the interaction and outcomes.
The second is \emph{causal relevance}: there is a causal link between the actions and the outcome---to be elaborated below.
The third is the \emph{possibility to act otherwise}: 
We use our evaluation of actual and potential actions and its consequences on the outcome. 
However, our approach is descriptive in the sense that we do not analyze normative aspects such as 
values, virtues, intent, or morality underlying the actions of the agents \cite{Scanlon98}, which is also why we avoid the commonly used term \emph{blameworthiness}.

In this paper, we provide a game-theoretic account of responsibility allocation in a multi-agent interaction setting.
We model multi-agent interaction as 
a game of imperfect information in extensive form \cite{Kuhn,Owen} between $n$ individually rational players.
%
%The players and the actions available to them form a ``closed world'' of possibilities. 
Players are assumed to be rational in a weak sense: they are aware of the game, the other players, their own actions, and the outcome.
In particular, we shall assume agency. 
We study both \emph{forward} and \emph{backward} notions of responsibility \cite{vandePoel11}.
A forward notion looks at the game as a whole and ascribes responsibilities to players based on all potential plays.
A backward notion looks at a play and ascribes responsibility to each player for that play.
We approach responsibility allocation to individual agents in two steps.
First, we look at coalitions of players and define notions of forward and backward responsibilities for a coalition. % against other coalitions.
Second, we define a value function from coalitions to their responsibilities, and define the individual allocation as a power index of this value function \cite{Owen}.

For the first step, we distinguish between \emph{causal} and \emph{strategic} backward responsibility, which correspond to the \emph{responsiblity-as-cause} and \emph{responsibility-as-capacity} notions of \cite{vandePoel11}. 
Intuitively, a coalition is causally backward responsible for an outcome along a play if there is a different
strategy the coalition could have adopted that would have avoided the outcome, against the same strategy of the other players.
%However, players in the coalition may not be aware, given their epistemic states, of this power to affect the outcome.
Strategic backward responsibility strengthens the requirement: the coalition should be aware, given their epistemic state,
of this ability to affect the outcome.
Our key technical result is a relationship between forward and strategic backward responsibility: in a game of perfect recall,
a coalition is forward responsible for an outcome \emph{if{}f} it contains a strategically backward responsible coalition for every play with the outcome, and is minimal with respect to this property (\Cref{thm: forward - backward}). Moreover, we show that all forms of responsibility of a coalition can be checked in polynomial time (\Cref{thm:complexity}).

For the second step, our approach follows \cite{vonNeumann1928}, and relies on a transition from non-cooperative
games to cooperative values.
While there are different measures of value in a cooperative game, we pick the Shapley value \cite{Shapley1953} for its familiarity and its canonicity.
Thus, we ascribe \emph{quantitative} measures of responsibility in order to compare the relative responsibility of agents for an outcome.
As a motivation we show that our modeling choices allow us to precisely talk about various aspects of many well-known scenarios from the 
moral philosophy and causality literature.

The condition for causal dependence is often tested in a framework of \emph{actual causality} \cite{HalpernP04,Halpern15}  (henceforth HP, after its proponents) or 
the NESS test %(necessary element of a sufficient set of events)
 \cite{HartHonore1959}. HP provides a formal model for causal contribution to an outcome and models institutional rules as \emph{structural equations}. 
 We believe that structural equations are too weak to naturally model several situations of interest, such as agency, knowledge, or temporal
sequentiality.
Consider the prototypical example in which Suzy and Billy both throw stones at a bottle. Suzy's stone hits first and the bottle breaks \cite{HalpernBook}.
Who should be responsible?
The description of the problem or the structural equations do not clarify if either Suzy or Billy knew if the other threw a rock,
or if Billy's strategy to throw was conditional, knowing that Suzy had indeed thrown already. 
These nuances are easily modeled in our setting due to the additional modeling power of extensive form games.

In a technical sense, we can embed structural equations into our framework, and our causal backward responsibility is exactly the \emph{but-for} condition in causal reasoning (\Cref{thm:HP}). 
HP's actual causality goes beyond but-for-causes: in the above example, HP considers Suzy to be responsible -- but not Billy -- by interposing a counterfactual world in which Suzy does \emph{not} throw, Billy does throw, but still Billy's stone does \emph{not} hit the bottle (where we use Halpern's modified version \cite{Halpern15}). 
We find this problematic: the intervention (Billy's stone not hitting the bottle) is not an agent to whom we can ascribe agency. 
Because of the symmetry of the players, and our insistence on comparing strategies with other agent strategies, we hold both equally responsible. 
We believe HP's allocation of backward responsibility solely to Suzy is not uncontroversial: replace Suzy and Billy with two assassins who simultaneously
(and without knowledge of the other) shoot a person.
Even if the laws of physics decide which bullet reaches first, in moral or legal considerations, we would hold both assassins responsible.
A second advantage of games over structural equations is that responsibility of a coalition can be computed in polynomial time; thus, 
checking if an individual is responsible is in NP, as opposed to the harder class $D^P$ for HP.

\medskip
\noindent\textbf{Other related work.}
Close to our work, \cite{BrahamVanHees2012} give an account of moral responsibility as \emph{normal form} games with pure strategies (the model is one-shot, perfect-information).
Models of (qualitative and quantitative) responsibility have been studied for non-probabilistic Kripke structures with games of possibly
infinite duration (see, e.g., \cite{ChocklerHK2008,BeerBCOT09,BullingD13,YazdanpanahD16}). 
Quantitative measures of \emph{influence} in causal models \cite{ChocklerH04} have generated a fruitful strand in the causality literature 
\cite{AleksandrowiczCHI14,DBLP:journals/corr/Chockler16,DBLP:conf/aaai/FriedenbergH19}. A detailed comparison with other accounts of responsibility is the content of \Cref{sec: other work}.

Shapley-like values have been used to allocate responsibility \cite{DBLP:conf/aaai/FriedenbergH19,YazdanpanahDJAL19},
and recently been rediscovered for the explanation of machine learning models \cite{StrumbeljK14,LL17}.

\section{Preliminaries}

In this section we recall some basic definitions from non-cooperative game theory (see, e.g., \cite{Owen}). We rely on von Neumann's framework of extensive form games \cite{vonNeumann1928} with the only exception that leaves are not labeled with payoff vectors, but a binary variable $E$ or $\neg E$ indicating whether a certain event has occured or not.

\begin{definition}[Extensive form game]
	A finite \emph{$n$-player game} $\G$ in \emph{extensive form} consists of the following data:
	\begin{enumerate}
		\item a finite directed tree $\T$, called the \emph{game tree}, whose vertices are called the \emph{states} of $\G$;
		\item a partition of the non-terminal states of $\T$ into sets $P_0, P_1, ..., P_n$;
		\item for each $s\in P_i$ a finite set of actions $\Act(s)$ that is in bijection with the successors of $s$;
		\item for each $s\in P_0$ a probability distribution $p(s)$ over $\Act(s)$;
		\item for each $P_i$ a partition $P_i = \dot\bigcup_{I\in\I_i} I$ into \emph{information sets} such that $\Act(s) = \Act(s')$ for all $s, s'$ belonging to the same information set $I$, and $p(s) = p(s')$ if additionally $s,s'\in P_0$;
		\item for each leaf $s$ of $\T$ a \emph{labeling} $l(s)\in \{E, \neg E\}$.
	\end{enumerate}
\end{definition}

Intuitively, the game is played as follows. 
One starts at the root $s_0$ of $\T$. 
If the current state $s$ belongs to $P_0$, then the next state is chosen by a random player called $\nature$ according to $p(s)$.
If the current state $s$ belongs to $P_i$, then player $i$ chooses a successor state by choosing an action from $\Act(s)$. 
The intended meaning of the information set $I$ is that player $i$ cannot distinguish the states in $I$ and must choose an action independent of the specific state in $I$. 
A \emph{play} in $\G$ is a path from the root to a terminal state, and the set of plays is denoted by $\Plays(\G)$. 
The plays that run through state $s$ or information set $I$ are denoted by $\Plays_s(\G)$ and $\Plays_I(\G)$. 
A play ending in a state with label $E$ is called an \emph{$E$-play}.

Throughout, we will restrict our attention to games of \emph{perfect recall}, in which, intuitively, no player forgets his own history of actions. 
This is a reasonable assumption in many real-world scenarios, especially those of short duration. It also captures the intuition in a setting of responsibility that an agent should have known information available. For a state $s$ let $\hist_\G(s) = I_0a_0I_1a_1...a_{k-1}I_k$ be the sequence of information sets visited and actions taken on the path from the root of $\T$ to $s$. Given $C\subseteq\{1,...,n\}$ let $\hist_\G(s, C)$ be the subsequence obtained from $\hist_\G(s)$ by removing each $I_j$ and action $a_j$ that is not under control of a player in $C$.

\begin{definition}[Perfect recall]\label{def:perfect recall}
A game $\G$ has \emph{perfect recall} if for each player $i$ and any two states $s, s'$ in the same information set of player $i$ we have $\hist_\G(s,\{i\}) = \hist_\G(s',\{i\})$. 
\end{definition}

In the presence of uncertainty that comes with non-singleton information sets, players may prefer to act \emph{randomly} instead of \emph{deterministically}. 
Since we are not targeting equilibria, this difference is mild for the theory developed in this paper. 
Still, Kuhn's classical theorem on outcome equivalence in games of perfect recall \cite{Kuhn} is our motivation for allowing behavioral strategies throughout.

\begin{definition}[Strategies]
	A \emph{(behavioral) strategy} for player $i$ is an element $\sigma_i = \{\sigma_I\}_{I\in \I_i} \in \prod_{I\in \I_i}\Dist(\Act(I))$. It is \emph{pure} if each $\sigma_I$ is a Dirac distribution. A \emph{strategy profile} is a set $\{\sigma_i\}_{i=1}^n$, where each $\sigma_i$ is a strategy for player $i$.
\end{definition}

A play $\rho$ is \emph{consistent} with a strategy profile $\sigma = \{\sigma_i\}$ if for every information set $I$ on $\rho$ the chosen action has positive probability in $\sigma_I$. Taking only consistent plays induces subsets $\Plays^{\sigma}_s(\G)\subseteq \Plays_s(\G)$ and $\Plays^{\sigma}_I(\G)\subseteq \Plays_I(\G)$.

\section{Responsibility in non-cooperative games}
\label{sec:qualitative}

We now identify three qualitative notions of responsibility which we present in decreasing order according to their logical strength.

\subsection{Forward responsibility}

The individual responsibility of player $i$ will be an average of the marginal contribution of $i$ to \emph{coalitional responsibility}, i.e., the responsibility of a group of players $C$. We first formalize that the players in $C$ act \emph{collaboratively}.

\begin{definition}[Game induced by a coalition]\label{def:induced game on coalition}
	Let $\G$ be an $n$-player game and $C\subseteq \{1,...,n\}$. The $2$-player game $\G_C$ is obtained from $\G$ as follows: The two players $C$ and $\overline{C}$ are in control of the states in $P_C = \bigcup_{i\in C} P_i$ and $P_{\overline{C}}=\bigcup_{i\in \{1,...,n\}\setminus C} P_i$. Two states $s, s'\in P_C$ belong to the same information set in $\G_C$ if and only if they belong to the same information set in $\G$ and $\hist_\G(s, C) = \hist_\G(s',C)$. Similarly for $\overline{C}$. The labeling of terminal states remains unchanged, as do the states and distributions in control of player $\nature$.
\end{definition}

The information sets are the coarsest refinement of the existing information sets such that $\G_C$ is a game of perfect recall. The rationale for this is that the coalition $C$ will share knowledge among its members; states that are indistinguishable by one player in $C$ can become distinguishable by $C$ because another player in the coalition can tell them apart. In order to illustrate this point, consider the following example.

\begin{example}[Matching pennies] \label{ex:epistemic state}
	Two players independently choose heads or tails, and $E$ is the event that they made opposite choices. This game is depicted in \Cref{fig:pennies}, where the notation $s:i$ expresses that state $s$ is in control of player $i$, dashed lines connect states in the same information set, and the red leaves indicate $E$-plays. If we did not refine the information sets in $\G_C$, 
the coalition $C = \{1,2\}$ would not distinguish $s_1$ and $s_2$ even though a coalition member chose the action that produces the branches to these states. Thus enforcing perfect recall in $\G_C$ is necessary to model epistemic knowledge in the coalitional setting.

\begin{figure}[t]
	\centering
	\captionsetup{labelfont={bf,up}}
	
	\begin{tikzpicture}[->,>=stealth',shorten >=1pt,auto,node distance=0.5cm, semithick]
	\input{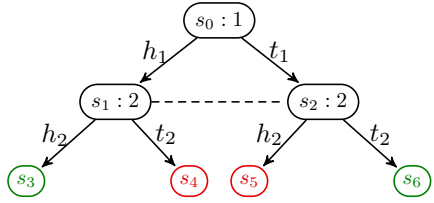}	
	\end{tikzpicture}
	\caption{Matching pennies}\label{fig:pennies}
\end{figure}
\end{example}

\begin{definition}[Forward responsibility]\label{def: forward resp}
	Let $\G$ be an $n$-player game. We say that $C\subseteq \{1,...,n\}$ is \emph{forward responsible} (henceforth $\f$-responsible) for $E$ if  {\upshape $\,$($\Fprop$)$\,$} there is a strategy $\sigma$ of the player $C$ in $\G_C$ such that all plays in $\Plays^\sigma(\G_C)$ have label $\neg E$, and $C$ is minimal with respect to property {\upshape $\,$($\Fprop$)}. 
\end{definition}

Being $\f$-responsible means that $C$ wins the game $\G_C$ under the reachability objective $\neg E$, and this condition is not satisfied for any proper subset of $C$. In order to illustrate \Cref{def: forward resp} and the subsequent notions of responsibility, we use the following toy scenario as running example. More interesting cases are postponed to the collection gathered in \Cref{sec:examples}.  

\begin{example}\label{ex:running example}
	Player 1 first decides between mode $A$ and mode $B$. In mode $A$, players 2 and 3 independently choose a side of a coin as in \Cref{fig:pennies}. In mode $B$, $\nature$ tosses a coin, the result is revealed, and then player 3 chooses a side. The variable $E$ denotes the event where the two sides of coins are not the same. See \Cref{fig:running example} for a visualization of this game. 
	
	No single player is $\f$-responsible. This is obvious for players 1 and 2, and for player 3 this is due to the non-trivial information set $\{s_3, s_4\}$: the strategy that enforces $\neg E$ in one of the states prevents it in the other. The coalition $\{1, 3\}$ is $\f$-responsible since player 1 can choose to move into mode $B$, and player $3$ responds according to what is revealed to him by $\nature$. Likewise $\{2,3\}$ is $\f$-responsible.
	\begin{figure}[h]
		\centering
		\captionsetup{labelfont={bf,up}}
		
		\begin{tikzpicture}[->,>=stealth',shorten >=1pt,auto,node distance=0.5cm, semithick]
		\input{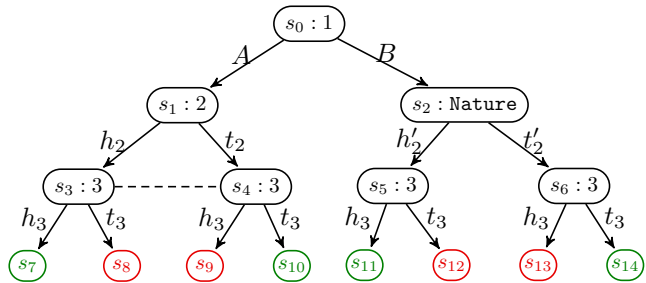}	
		\end{tikzpicture}
		\caption{$3$-player matching pennies}\label{fig:running example}
	\end{figure}
\end{example}

\subsection{Strategic backward responsibility}

In contrast to the preceeding section, backward responsibility is defined  relative to a given play $\rho$ ending in $E$. We distinguish between \emph{strategic} backward responsibility and \emph{causal} backward responsibility. Informally speaking, strategic backward responsibility means that $C$ had the power to prevent $E$ as the play $\rho$ evolved \emph{given its epistemic knowledge}: 

\begin{definition}[Strategic backward responsibility]\label{def: strategic resp}
	The set $C$ is \emph{strategically backward responsible} (henceforth $\s$-responsible) for $E$ based on an $E$-play $\rho$ if {\upshape $\,$($\S$)$\,$} there exist a state $s$ on $\rho$ and a strategy $\sigma$ of the player $C$ in $\G_C$ such that:
	\begin{enumerate}
		\item $\rho$ is consistent with $\sigma$ until $s$ is reached, and
		\item all plays in $\Plays_I^{\sigma}(\G_C)$ have label $\neg E$, where $I$ is the information set of $s$,
	\end{enumerate} 
	and $C$ is minimal with respect to property {\upshape$\,$($\S$)}.
\end{definition}

Clearly, each $\f$-responsible coalition, or even any coalition $C$ satisfying property ($\Fprop$) contains an $\s$-responsible coalition based on any $E$-play. In order to show property ($\S$) for $C$ one can simply take $s$ to be the initial state and take the strategy for $C$ that globally enforces $\neg E$.

\begin{example}\label{ex:running strategic}
	Consider again our running example depicted in \Cref{fig:running example}. First let $\rho$ be one of the two $E$-plays ending in $s_8$ and $s_{9}$. No single player is $\s$-responsible for $E$ based on one of these plays. However, the coalitions $\{1,3\}$ and $\{2,3\}$ are $\s$-responsible for $E$ based on $\rho$ since they are $\f$-responsible. Based on the two $E$-plays ending in $s_{12}$ and $s_{13}$, the single player 3 is strategically backwards responsible for $E$ since he could have chosen differently in $s_5$, respectively, $s_6$.
\end{example}

\begin{restatable}[Relating forward and backward responsibility]{theorem}{relatingFS}\label{thm: forward - backward}
	In a game of perfect recall, a coalition $C\subseteq \{1,...,n\}$ is $\f$-responsible for $E$ if and only if it contains an $\s$-responsible coalition for $E$ based on all $E$-plays, and is minimal with respect to this property.
\end{restatable}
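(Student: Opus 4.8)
The plan is to separate the two minimality requirements from the underlying ``witness'' properties and reduce the theorem to an equivalence of plain predicates. Write~($\Fprop$) for the property in \Cref{def: forward resp} (player~$C$ has a strategy in $\G_C$ enforcing $\neg E$ everywhere) and let~(P) abbreviate the right-hand property ``for every $E$-play $\rho$ there is a $D\subseteq C$ that is $\s$-responsible based on $\rho$''. Then $\f$-responsibility is ``($\Fprop$) together with $\subseteq$-minimality'', while the right-hand side of the theorem is ``(P) together with $\subseteq$-minimality''. Since the $\subseteq$-minimal members of a family of coalitions are determined by the family alone, it suffices to prove that ($\Fprop$) and~(P) single out the \emph{same} coalitions; I will in fact show the two predicates are equivalent.

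The easy inclusion is ($\Fprop$)$\Rightarrow$(P), already hinted at after \Cref{def: strategic resp}: if $C$ enforces $\neg E$ globally, then for any $E$-play $\rho$ property~($\S$) holds for $C$ with $s$ the root, so $C$ contains a $\subseteq$-minimal --- hence $\s$-responsible --- subcoalition based on $\rho$. For the converse (P)$\Rightarrow$($\Fprop$) I argue by contradiction. Negating~($\Fprop$) means that against every strategy of $C$ in $\G_C$ some consistent play is an $E$-play; reading $\nature$ as adversarial (every branch is consistent), this says $C$ has no winning strategy in the safety game ``avoid $E$'' on $\G_C$. Because $\G_C$ has perfect recall, Kuhn's theorem \cite{Kuhn} legitimises the knowledge-based subset construction, turning this into a finite perfect-information reachability game for the opponent; such games are determined, so the opponent (together with $\nature$) has a deterministic, fully informed strategy $\Pi$ forcing $E$ against every strategy of $C$.

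Now fix any $E$-play $\rho^*$ consistent with $\Pi$. Property~(P) supplies a coalition $D\subseteq C$, a state $s$ on $\rho^*$ with information set $I$ in $\G_D$, and a strategy $\tau$ of $D$ in $\G_D$ that is consistent with $\rho^*$ up to $s$ and makes every play of $\Plays_I^\tau(\G_D)$ avoid $E$. I would lift $\tau$ to a strategy $\sigma$ of $C$ in $\G_C$ that replays the actions of $\rho^*$ before $s$ and follows $\tau$ on all $D$-information sets; because $\rho^*$ already obeys $\tau$ on $D$-states up to $s$ (consistency) these prescriptions do not clash, and since $C$ only fixes moves that $\overline D$ was free to choose in $\G_D$ one obtains the global inclusion $\Plays^\sigma(\G_C)\subseteq\Plays^\tau(\G_D)$. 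Playing $\sigma$ against $\Pi$ produces a play $\rho'$: it is consistent with $\Pi$, hence an $E$-play; but it shares the prefix of $\rho^*$ up to $s$ (same $C$-moves, and $\Pi$ deterministic on a common history), so it passes through $I$ and, being $\tau$-consistent, lies in $\Plays_I^\tau(\G_D)$ and therefore avoids $E$ --- a contradiction.

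The one genuinely hard step is obtaining the forcing strategy $\Pi$, i.e.\ the determinacy of the safety/reachability game under imperfect information; this is exactly where perfect recall is indispensable, as it underwrites both the subset construction (via Kuhn) and the fact that a player's information sets are met at most once and in linear order along a play, which is what makes the ``reach $s$ consistently with $\rho^*$'' requirement decouple from the ``control $\neg E$ from $I$'' requirement in the lifting. The remaining bookkeeping --- verifying the global inclusion $\Plays^\sigma(\G_C)\subseteq\Plays^\tau(\G_D)$ and that $\rho'$ indeed meets $I$ --- is routine.
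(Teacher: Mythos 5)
Your reduction to the equivalence of the plain predicates ($\Fprop$) and (P), and the easy direction ($\Fprop$)$\Rightarrow$(P), match the paper. The hard direction, however, rests on a determinacy claim that is false: from ``$C$ has no strategy in $\G_C$ all of whose consistent plays avoid $E$'' you cannot conclude that the opponent has a single strategy $\Pi$ forcing $E$ against \emph{every} strategy of $C$. Sure-winning in games of imperfect information is not determined, even with perfect recall and even if the opponent is made omniscient, because omniscience does not let the opponent move \emph{after} $C$. The paper's own matching-pennies game (\Cref{fig:pennies}) with $C=\{2\}$ is a counterexample: player $2$ cannot enforce $\neg E$ (whatever she plays at $\{s_1,s_2\}$, some consistent play is an $E$-play), yet no strategy of player $1$ forces $E$ against every strategy of player $2$ (if player $1$ commits to $h_1$, the strategy ``always $h_2$'' avoids $E$; if to $t_1$, ``always $t_2$'' does). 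The subset construction does yield a determined perfect-information game, but the statement it certifies is ``$C$ sure-wins the original iff $C$ wins the belief game''; when the opponent wins the belief game, its winning strategy there also resolves \emph{which concrete state of a belief is actual}, a choice that in the original game is fixed by the history and not available to the opponent as a temporal strategy. (Kuhn's theorem, which concerns behavioral versus mixed strategies, is in any case not the relevant tool here.) Since your entire contradiction is anchored on the play $\rho^*$ chosen consistent with $\Pi$, the argument cannot get started in precisely the situations the theorem is about.

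The lifting step itself (extending $\tau$ from $\G_D$ to $\G_C$ using that $C$'s information sets refine $D$'s on $P_D$, and that perfect recall makes ``the $\rho^*$-action at an information set'' well defined) is sound and is implicitly needed in the paper too. But the real content of the backward direction lies elsewhere: with one escape pair $(I_\rho,\sigma_\rho)$ per $E$-play $\rho$, one must assemble a \emph{single} global strategy. The paper does this directly, proving via perfect recall that if each $I_\rho$ is chosen as the \emph{first} information set on $\rho$ from which $C$ can avoid $E$, then any two such sets are either equal or never both visited by a path of $\G_C$; the local strategies therefore live on disjoint subtrees $\Post(I_\rho)$ and can be glued, and every $E$-play must enter some $I_\rho$. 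Your proof contains no substitute for this non-interference lemma; to repair it you would need to drop the appeal to $\Pi$ and argue about all $E$-plays simultaneously along these lines.
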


\begin{proof}
	We show that condition ($\Fprop$) of \Cref{def: forward resp} is equivalent to the fact that $C$ contains an $\s$-responsible coalition for $E$ based on all $E$-plays. A straightforward ping-pong argument then implies that $\f$-responsibility is equivalent to this statement \emph{and} the additional minimality condition in the statement of \Cref{thm: forward - backward}.
	As was already discussed above, the forward direction is immediate.	
	
	Let us consider the backward direction. It is easy to check that if $C$ contains an $\s$-responsible coalition for $E$ based on every $E$-play, then $C$ satisfies property ($\S$) with respect to every $E$-play. Thus for every $E$-play $\rho$ there is an information set $I_\rho$ and a strategy $\sigma_\rho$ for $C$ which enforces reaching $\neg E$ from every state in $ I_\rho$. We may assume that $I_\rho$ is the first information set on $\rho$ with this property.
	Since $\G_C$ is a game of perfect recall (see \Cref{def:perfect recall} and \Cref{ex:epistemic state}), we have for two $E$-paths $\rho, \rho'$ that either $I_\rho = I_{\rho'}$ or that no path in $\G_C$ visits both $I_\rho$ and $I_{\rho'}$. For if there was such a path $\rho''$ visiting $I_\rho$ and $I_{\rho'}$ in this order (the other case is symmetric), then due to perfect recall $\rho'$ would also need to visit $I_\rho$ before $I_{\rho'}$. This is a contradiction to the assumption that $I_{\rho'}$ is the first information set on $\rho'$ from which $C$ can avoid $E$.
	
	For any information set $I$ let $\Post(I)$ denote the set of states that lie in a subtree rooted at some state in $I$. Note that $\sigma_\rho$ can be replaced by any other strategy for $C$ whose behavior on the information sets present in $\Post(I_\rho)$ is the same as $\sigma_\rho$. By the previous paragraph, no other set $I_{\rho'}$ can be present in this union. This means that the strategies $\sigma_\rho$ can be assembled into a strategy $\sigma$ such that $\sigma$ coincides with $\sigma_\rho$ on $\Post(I_\rho)$. Since every $E$-path in $\G_C$ has to pass through one of the $I_\rho$ and $\sigma$ avoids $E$ from all states contained in $I_\rho$, we conclude that $\sigma$ is a strategy for $C$ that \emph{globally} enforces reaching $\neg E$, as desired.
	\qed
\end{proof}

\begin{remark}
	\Cref{thm: forward - backward} would not hold without the refinement of information sets in $\G_C$ illustrated in \Cref{ex:epistemic state}. Consider the game of perfect recall depicted in \Cref{fig:counterexample} with $E$-plays indicated by red terminal states. If the information sets in $\G_C$ had not been refined, then $C = \{2,3\}$ would still be strategically backward responsible for $E$ based on all $E$-plays: $C$ can enforce $\neg E$ from $s_1$ and $s_2$ independently. However, in this case $C$ would \emph{not} be forward responsible since player 3 would not know which action to choose in the information set $\{s_3, s_4\}$. In the refined version $\G_C$ player $C$ \emph{can} distinguish $s_3$ and $s_4$ because player 2 can.
		
		\begin{figure}[t]
			\centering
			\captionsetup{labelfont={bf,up}}
			
			\begin{tikzpicture}[->,>=stealth',shorten >=1pt,auto,node distance=0.5cm, semithick]
			\input{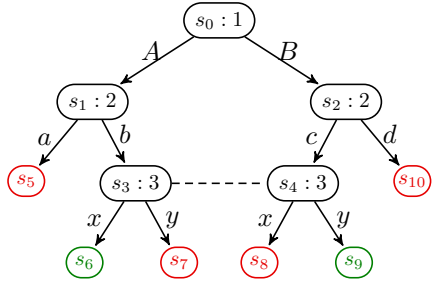}	
			\end{tikzpicture}
			\caption{Counterexample to \Cref{thm: forward - backward} without refining information sets in $\G_C$}\label{fig:counterexample}
		\end{figure}
\end{remark}

\subsection{Causal backward responsibility}

A set of players might have caused $E$ unknowingly and even inadvertently. For example, in the simple 2-player matching pennies scenario depicted in \Cref{fig:pennies}, imagine that player 1 picks heads and player 2 picks tails. Both players may be held responsible for the result since changing their individual actions would have prevented it. However, \emph{they were simply not aware of this fact}. The lack of knowledge that renders a strategic sense of responsibility implausible in this scenario refers to the uncertainty of the opponent's strategy (which also entails the uncertainty that comes with non-singleton information sets). Causal backward responsibility leverages this lack of knowledge by fixing a strategy profile for the opposing coalition. Intuitively, the coalition $C$ \emph{guesses} the strategy of $\overline{C}$ and thus attains the \emph{hypothetical} knowledge to anticipate the actions of $\overline{C}$.

\begin{definition}[Causal backward responsibility]\label{def:causal backward}
	Let $\sigma$ be a strategy profile in $\G$ such that $\Plays^{\sigma}(\G)$ contains an $E$-path $\rho$. Let $(\sigma_C, \sigma_{\overline{C}})$ be the strategy profile induced by $\sigma$ in $\G_C$. Then $C\subseteq \{1,...,n\}$ is \emph{causally backward responsible} (henceforth $\c$-responsible) for $E$ based on $\rho$ and $\sigma$ if  {\upshape $\,$($\C$)$\,$} there exists a strategy $\sigma_C'$ for $C$ in $\G_C$ such that all plays in $\Plays^{\sigma_C', \sigma_{\overline{C}}}(\G_C)$ that are also consistent with $\nature$'s random choices on $\rho$ have label $\neg E$, and $C$ is minimal with respect to property {\upshape $\,$($\C$)}. 
\end{definition}

Intuitively, $C$ satisfies condition {\upshape $\,$($\C$)$\,$} if $C$'s actions made a difference to the outcome when everything else is held fixed: The $E$-path $\rho$ consistent with the strategy profile contains a state from which the coalition $C$ \emph{could} have employed a different strategy that enforces reaching $\neg E$ \emph{provided that} $C$'s guess on the opponent's strategy was right. If, as the game evolved, $\overline{C}$ turns out to follow a different strategy, then no responsibility based on this strategy profile is ascribed -- $C$ simply made a bad guess. Fixing strategies might seem like a far-fetched setup. We illustrate in \Cref{sec:examples} that many scenarios naturally determine a strategy profile, and we show in \Cref{sec: other work} that this canonically corresponds to fixing a \emph{context} for a causal models.

\begin{example}\label{ex:example causal}
	Consider once more the scenario depicted in \Cref{fig:running example}. In the following denote the Dirac distribution over $\Act(s)$ concentrated on $a$ by $s\mapsto a$. Based on the strategy profile $\sigma^1$ given by $s_0\mapsto B, s_1\mapsto h_2, \{s_3, s_4\}\mapsto h_3, s_5\mapsto t_3, s_6\mapsto t_3$ and the play ending in $s_{12}$, players 1 and 3 are $\c$-responsible for $E$; player 1 can prevent $E$ by switching to $A$ since players 2 and 3 make identical choices in mode $A$, and player 3 can prevent $E$ by choosing heads in $s_5$. The reader is invited to check that every single player is $\c$-responsible for $E$ based on the strategy profile $\sigma^2$ given by $s_0 \mapsto A, s_1\mapsto h_2, \{s_3, s_4\}\mapsto t_3, s_5\mapsto h_3, s_6\mapsto t_3$.
\end{example}

The example illustrates that $\c$-responsibility is an approach to identify `causes' for $E$ irrespective of epistemic information available to the players. A comparison to Halpern and Pearl's actual causes is given in \Cref{sec: other work}.

 \begin{restatable}[From strategic responsibility to causal responsibility]{proposition}{strategicToCausal}\label{prop: strategic vs. causal}
 	Let $\sigma$ be a strategy profile and let $\rho\in \Plays^{\sigma}(\G)$ be an $E$-play. If $C$ is $\s$-responsible for $E$ based on $\rho$, then $C$ contains a $\c$-responsible coalition for $E$ based on $\rho$ and $\sigma$.
 \end{restatable}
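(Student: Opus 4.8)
The plan is to show that the coalition $C$ itself satisfies property ($\C$) of \Cref{def:causal backward}, and then to descend to a minimal sub-coalition. Once $C$ satisfies ($\C$), finiteness lets me pick a $\subseteq$-minimal $C^\ast\subseteq C$ that still satisfies ($\C$). Any proper subset of $C^\ast$ is again a subset of $C$, so by the minimality of $C^\ast$ within $C$ it cannot satisfy ($\C$); hence $C^\ast$ is minimal with respect to ($\C$) in the global sense, i.e.\ $\c$-responsible, and it is contained in $C$. This is exactly the conclusion, so the whole content of the proposition lies in verifying property ($\C$) for $C$.

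To produce the required counterfactual strategy $\sigma_C'$, I would unpack the hypothesis. Since $C$ is $\s$-responsible for $E$ based on $\rho$, \Cref{def: strategic resp} hands us a state $s$ on $\rho$, its information set $I$ in $\G_C$, and a strategy (call it $\tau$) for $C$ such that $\rho$ is consistent with $\tau$ up to $s$ and every play in $\Plays_I^{\tau}(\G_C)$ carries the label $\neg E$. I would then define $\sigma_C'$ by gluing two pieces: on every $C$-information set lying in $\Post(I)$ let $\sigma_C'$ coincide with $\tau$, and on every $C$-information set visited strictly before $s$ along $\rho$ let $\sigma_C'$ be the Dirac strategy on the action that $\rho$ takes there (elsewhere the choice is irrelevant). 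The Dirac prescription before $s$ prevents $C$ itself from branching off $\rho$ prematurely, while the $\tau$-piece activates exactly from $I$ on. These two prescriptions never clash: by the perfect recall of $\G_C$, an information set visited strictly before $s$ on $\rho$ consists of ancestors of $s$ and cannot recur inside $\Post(I)$ — the same disjointness phenomenon already exploited in the proof of \Cref{thm: forward - backward}.

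The verification splits into two claims. The easy one is that once a consistent play reaches $I$ it avoids $E$: from $I$ onwards $\sigma_C'$ behaves like $\tau$, so such a play belongs to $\Plays_I^{\tau}(\G_C)$ and is labelled $\neg E$ by the defining property of $\tau$. The substantive claim is that every play $\rho'$ consistent with $(\sigma_C',\sigma_{\overline{C}})$ and with $\nature$'s choices on $\rho$ actually reaches $s$, i.e.\ coincides with $\rho$ up to $s$. At $C$-controlled states before $s$ this holds because $\sigma_C'$ deterministically replays $\rho$; at $\nature$-controlled states it holds because we condition on $\nature$'s choices along $\rho$. The delicate case is the $\overline{C}$-controlled states before $s$, where consistency with $\sigma_{\overline{C}}$ alone only forces $\rho'$ to take a positively weighted action, not necessarily the one realized on $\rho$. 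I expect this to be the main obstacle: it must be handled through the but-for reading underlying \Cref{def:causal backward}, in which the counterfactual keeps the realized context of $\rho$ fixed up to the point where $C$ intervenes, so that before $s$ the opposing coalition is pinned to its $\rho$-actions and $\rho'$ is forced to track $\rho$ until $I$ is reached; the universal guarantee of $\tau$, valid against \emph{every} behaviour of $\overline{C}$, then takes over from $I$ onwards. Combining the two claims shows that all relevant plays are labelled $\neg E$, so $C$ satisfies ($\C$), and the descent of the first paragraph completes the proof.
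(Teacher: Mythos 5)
Your overall route is the paper's: show that $C$ itself satisfies property ($\C$), then descend to a $\subseteq$-minimal subset of $C$ still satisfying ($\C$), which is automatically $\c$-responsible. On the $C$-side your construction is actually more careful than the paper's: the paper reuses the witnessing strategy from ($\S$) unchanged and simply asserts the inclusion $\Plays^{\sigma',\sigma_{\overline{C}}}(\G_C)\subseteq\Plays_I^{\sigma'}(\G_C)$, whereas your Dirac modification before $s$ genuinely rules out plays that branch off $\rho$ at a $C$-controlled state merely because the ($\S$)-witness puts positive (but not full) probability on $\rho$'s actions there.

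The gap is exactly where you suspect it, and your proposed escape is not licensed by the definition. Property ($\C$) in \Cref{def:causal backward} restricts the plays under consideration only by consistency with $\sigma_{\overline{C}}$ and with $\nature$'s choices on $\rho$; it does \emph{not} pin $\overline{C}$ to the actions it actually realized on $\rho$. So if $\sigma_{\overline{C}}$ is properly randomized, a play in $\Plays^{\sigma_C',\sigma_{\overline{C}}}(\G_C)$ may leave $\rho$ at a $\overline{C}$-controlled state before $s$, never reach $I$, and end in $E$, and then ($\C$) fails even though ($\S$) holds. Concretely: let $\overline{C}$ move first and randomize between $a$ and $b$; after $a$ the single player in $C$ can reach $\neg E$ or $E$, after $b$ every leaf is $E$; let $\rho$ pass through $a$ and end in $E$. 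Then $C$ is $\s$-responsible from the state after $a$, but no strategy of $C$ makes all $(\sigma_C',\sigma_{\overline{C}})$-consistent plays end in $\neg E$ because of the $b$-branch. The ``but-for reading'' you invoke is the intuition behind the definition, not its content. The honest conclusion is: if $\sigma_{\overline{C}}$ is pure --- as it is in all of the paper's examples and in the encoding of causal models --- your argument closes completely and is a cleaner version of the paper's proof; for randomized $\sigma_{\overline{C}}$ the step you flag is a genuine obstruction that the paper's own one-line inclusion also glosses over, so you must either restrict to pure opposing strategies or strengthen the conditioning in ($\C$) to the realized actions of $\overline{C}$ along $\rho$, which is what your ``but-for'' paraphrase implicitly does.
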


 \begin{proof}
	The proposition follows from the fact that property ($\S$) of \Cref{def: strategic resp} implies property ($\C$) of \Cref{def:causal backward}, as we now show.
	By property ($\S$), there exists a state $s$ on $\rho$ together with a strategy $\sigma'$ for $C$ such that all plays in $\Plays_I^{\sigma'}(\G_C)$ have label $\neg E$, where $I$ is the information set of $s$. As $\sigma'$ is consistent with $\rho$ until $s$ is reached, we notice that $\Plays^{\sigma', \sigma_{\overline{C}}}(\G_C)\subseteq \Plays_I^{\sigma'}(\G_C)$. This shows that $\Plays^{\sigma', \sigma_{\overline{C}}}(\G_C)$ contains only paths ending in $\neg E$. Of course, this property still holds if one further restricts $\Plays^{\sigma', \sigma_{\overline{C}}}(\G_C)$ to plays on which $\nature$ only chooses the same actions as those on $\rho$, as required by property ($\C$).
	\qed
\end{proof}

\subsection{Quantifying responsibility}\label{sec:quantitative}

All responsibility notions considered so far are qualitative -- either a coalition is responsible or not. In this section we take the analysis one step further and \emph{quantify} how much responsibility each \emph{individual} player has. For this we employ the \emph{Shapley value} for cooperative games \cite{Shapley1953}, i.e., games defined by a function $g\colon 2^n\to\reals$, where $g(C)$ represents the common gain (or cost, depending on the situation) which coalition $C$ can achieve collaboratively.

The responsibility notions $\{\f, \s, \c\}$ (i.e., $\f$orward, $\s$trategically backward, $\c$ausally backward) in an extensive form game naturally induce cooperative games. Let $\t \in\{\f, \s, \c\}$; if $\t = \s$ we assume that an $E$-play $\rho$ is fixed, and if $\t = \c$, we further assume that a strategy profile $\sigma$ is fixed (which we suppress from the notation for readability). We define the \emph{induced} cooperative game $g_{\G, \t}\colon 2^n\to\reals$ by setting $g_{\G, \t}(C)$ to be $1$ if $C$ contains a $\t$-responsible coalition for $E$ (based on $\rho$, resp. $\sigma$) and to be $0$ otherwise.

\begin{definition}[Responsibility value]
	The \emph{responsibility value} $\resp_{\G, \t}(i)$ of player $i$ in an $n$-player extensive form game $\G$ is the Shapley value of $i$ in the induced cooperative game $g_{\G, \t}$, i.e., 
	\[ \resp_{\G, \t}(i) = \frac{1}{n!}\sum_{\pi \in S_n} g_{\G,\t}(\pi_{\geq i}) -  g_{\G,\t}(\pi_{\geq i}\setminus\{i\}),\]
	where $S_n$ denotes the set of permutations on $\{1,...,n\}$ and $\pi_{\geq i} = \{j\in \{1,...n\}\mid \pi(j)\geq \pi(i)\}$.
\end{definition}

We have $\resp_{\G,\t}(i) >0$ if and only if player $i$ belongs to a $\t$-responsible coalition, i.e., if $i$ leaves the coalition, then it is not $\t$-responsible anymore. Moreover, we have $\sum_i\resp_{\G,\t} (i) = \resp_{\G, \t}(\{1,...,n\})- \resp_{\G, \t}(\emptyset)$ by a simple telescope sum argument, and this value is 1 unless $C=\emptyset$ is $\t$-responsible or $\{1,\ldots, n\}$ is not $\t$-responsible. Apart from these latter exceptional cases, the responsibility value measures the relative responsibility of each player against the other players within a given game. It is not intended to be compared across different types of responsibility, across plays, or even models.

The exceptional cases above are the only instances in which our theory allows \emph{responsibility voids} \cite{BrahamVanHees2018}, i.e., $\resp_{\G,\t}(i) =0$ for every player $i$. This is not unintended:  
If $C=\emptyset$ is $\t$-responsible, then all plays must have label $\lnot E$, meaning that (as far as the event $E$ is concerned) the actions of the agents are \emph{irrelevant}. If $C=\{1,\ldots, n\}$ is not $\t$-responsible, then ensuring $\lnot E$ is \emph{impossible}, and if the game does not involve randomization, this means that no play has label $\lnot E$. We present a scenario involving responsibility voids in \Cref{ex:bogus}.

\begin{example}
	In our running example depicted in \Cref{fig:running example} we have $\resp_{\G, \f}(1) = \resp_{\G, \f}(2) = 1/6$ and $\resp_{\G, \f}(3) =2/3$. Based on a play ending in $s_8$ or $s_9$, the same responsibility value arise for $\resp_{\G,\s}(i)$. But if $\rho$ ends in $s_{12}$ or $s_{13}$, then $\resp_{\G,\s}(3) = 1$ and $\resp_{\G,\s}(i) = 0$ for $i=1,2$. Based on the strategy profile $\sigma^2$ of \Cref{ex:example causal} we have $\resp_{\G,\c}(i) = 1/3$ for every player.
\end{example}

\subsection{Computational Complexity}\label{sec:complexity}

Remarkably, the complexity of deciding $\set{\f, \s, \c}$-responsibility for a coalition is polynomial time. 
Recall that extensive form games are usually labeled with \emph{payoff vectors} that specify the outcome for each individual player.
A strategy profile induces a probability distribution over $\Plays(\G)$, and the payoff for each player is the expected payoff under this distribution. A \emph{Nash equilibrium} \cite{Nash50} is a strategy profile such that no player can improve his expected payoff solely by changing his own behavior. In a two-player zero-sum game (i.e., the entries of each payoff vector add up to $0$), the payoff in any two Nash equilibria coincide and can be found in polynomial time \cite{KollerM92,Stengel96}.

\begin{restatable}[Complexity of deciding responsibility]{theorem}{complexity}\label{thm:complexity}
	Given an $n$-player game, a coalition $C\subseteq\{1,...,n\}$, and $\t\in\{\f,\s,\c\}$. If $\t = \s$ we assume that an $E$-path $\rho$ is fixed, and if $\t = \c$, we further assume that a strategy profile $\sigma$ is fixed. It is decidable in polynomial time whether $C$ is $\t$-responsible for $E$ (based on $\rho$, resp., $\sigma$).
\end{restatable}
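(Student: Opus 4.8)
The plan is to reduce the core decision underlying each of the three notions to computing the \emph{value} of a two-player zero-sum extensive form game of perfect recall -- for which polynomial-time algorithms exist via the sequence form \cite{KollerM92,Stengel96} -- and then to dispose of the minimality clause by a monotonicity argument. Throughout I exploit the following reduction: given a sub-forest of $\G_C$ (the subtrees rooted at some information set), regard it as a two-player zero-sum game in which $C$ is the maximizer, the opposing coalition $\overline{C}$ together with $\nature$ is the minimizer (treating $\nature$ as adversarial, which is sound because only the support of its distributions matters for a sure-winning objective), and every leaf pays $1$ to $C$ if it is labeled $\neg E$ and $0$ otherwise. Since $\G_C$ has perfect recall by \Cref{def:induced game on coalition}, so does this derived game, and its value is computable in polynomial time. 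The key observation is that, as the payoffs are $\{0,1\}$ and $\nature$ is (restricted to) full support, $C$ has a strategy forcing \emph{all} consistent plays into $\neg E$ if and only if the value equals $1$: value $1$ means $C$ can guarantee expected payoff $1$ against every pure response, which for $0/1$ payoffs says exactly that no reachable leaf is an $E$-leaf.

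With this engine, each property reduces to polynomially many value computations. For property ($\Fprop$) of \Cref{def: forward resp}, I would run the reduction on all of $\G_C$ and test whether the value is $1$. For property ($\S$) of \Cref{def: strategic resp}, I would iterate over the linearly many states $s$ on $\rho$; for each, let $I$ be the information set of $s$ and run the reduction on the sub-forest rooted at $I$, testing for value $1$ (an initial hidden choice of the starting state in $I$ models the constraint that $C$ uses one uniform strategy over $I$). Here I would first argue that the requirement ``$\rho$ is consistent with $\sigma$ until $s$'' decouples from the requirement to enforce $\neg E$ from $I$: by perfect recall (\Cref{def:perfect recall}), no information set of $C$ can contain both a proper ancestor of $s$ on $\rho$ and a state of $\Post(I)$, since the former's $C$-history is strictly shorter than that shared by the states of $I$, while the latter's is at least as long. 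The two constraints therefore touch disjoint information sets and can be satisfied simultaneously, so property ($\S$) for a given $s$ holds iff $C$ can enforce $\neg E$ from $I$. For property ($\C$) of \Cref{def:causal backward}, I would fix $\sigma_{\overline{C}}$ and $\nature$'s choices along $\rho$ by pruning $\G_C$ to the support edges of these fixed choices, then run the reduction on the pruned game; the minimizer retains only the residual support-choices of $\overline{C}$, but the game is still zero-sum of perfect recall, so its value is polynomial.

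It remains to handle minimality in polynomial time, the one place where a naive enumeration over the $2^{|C|}$ subcoalitions would blow up. The resolution is that each of the properties ($\Fprop$), ($\S$), ($\C$) is \emph{monotone}: if a coalition satisfies it, so does every superset. I would prove this by lifting a witnessing strategy from $\G_C$ to $\G_{C'}$ for $C\subseteq C'$, using that the information sets of $\G_{C'}$ refine those of $\G_C$ on shared states, and that the extra states formerly controlled by $\overline{C}$ can be played so as to reproduce exactly the same outcomes (for ($\C$), by replaying the fixed profile $\sigma$ on $C'\setminus C$). Given monotonicity, $C$ is minimal with respect to a property iff $C$ itself satisfies it while none of the $|C|\le n$ one-element deletions $C\setminus\{i\}$ does; indeed if some proper subset satisfied it, deleting an element outside that subset would preserve it, contradicting the deletion tests.

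Hence deciding $\t$-responsibility amounts to $1+|C|$ value computations (for ($\S$) further multiplied by the number of states on $\rho$), each polynomial, yielding an overall polynomial-time procedure. The main obstacle is conceptual rather than computational: one must resist modeling sure-winning under imperfect information by the standard belief/subset construction, which is exponential, and instead notice that a $0/1$-payoff zero-sum value computation captures precisely the sure-winning question while staying polynomial. The secondary subtleties are the perfect-recall decoupling argument that makes ($\S$) a purely ``local'' value problem, and the monotonicity fact that replaces the exponential subset enumeration in the minimality test by $n$ single-element checks.
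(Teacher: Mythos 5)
Your proof is correct and follows essentially the same route as the paper's: reduce each of ($\Fprop$), ($\S$), ($\C$) to polynomially many value computations in a two-player zero-sum extensive form game of perfect recall (solvable via the sequence form), iterating over the states of $\rho$ for ($\S$), pruning to the support of $\sigma_{\overline{C}}$ and $\nature$'s choices for ($\C$), and handling minimality by the $|C|$ single-element deletions. Your two additional observations -- the perfect-recall argument that decouples the ``consistent until $s$'' clause from the enforcement objective, and the explicit monotonicity lemma that licenses the single-deletion minimality test -- are correct and in fact make explicit steps the paper's proof leaves implicit.
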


\begin{proof}
	We show that the properties ($\Fprop$), ($\S$), and ($\C$) are each decidable in polynomial time. The additional minimality condition necessary for $\f$-, $\s$-, and $\c$-responsibility can then be checked by deciding ($\Fprop$), ($\S$), and ($\C$), respectively, for the $|C|$ many coalitions obtained from $C$ by removing a single player. Thus we need the subsequent procedures $|C|+1\leq n+1$ many times, which is still polynomial.
	
	We begin with the case $\t=\f$. Consider $\G_C$ as a zero-sum game where the payoff of a play $\rho$ is 1 for player $C$ (and hence $-1$ for player $\overline{C}$) if and only if $l(\rho) = \neg E$. Then the payoff for player $C$ in a Nash equilibrium of $\G_C$ is equal to the probability of those plays that end in $\neg E$. Since $\G_C$ is finite, $C$ satisfies property ($\Fprop$) \emph{iff} the payoff of player $C$ in a Nash equilibrium is 1, which can be decided in polynomial time as mentioned above.
	
	Next consider the case $\t = \s$. Let $s$ be a fixed state on the given play $\rho$, and denote by $I_s$ its information set. We construct an $n+1$-player zero-sum game $\widehat{\G_C^s}$ as follows: The root is a fresh state $r$ in control of player $n+1$, and its successors are the states contained in $I_s$. From each such state, the game $\widehat{\G_C^s}$ proceeds exactly as $\G_C$, i.e., the subtree of $s'\in I_s$ in $\widehat{\G_C^s}$ is identical to the subtree of $s'$ in $\G_C$. All information sets are restricted to the union of these trees. As before the payoff of a play $\rho$ is $1$ if and only if $l(\rho) = \neg E$. Then $C$ satisfies property ($\S$) with respect $E$ and $\rho$ \emph{iff}  there exists a state $s$ on $\rho$ such that the payoff for $C$ in a Nash equilibrium of $\widehat{\G_C^s}$ is $1$. This reduces deciding ($\S$) to linearly many checks each of which takes polynomial time.
	
	Finally consider the case $\t =\c$, let $(\sigma_C, \sigma_{\overline{C}})$ be the strategy profile in $\G_C$ induced by $\sigma$, and let $\rho$ be a play consistent with it. Write $ \sigma_{\overline{C}} = \{\sigma_I\}$, where $I$ ranges over the information sets of $\overline{C}$ in $\G_C$. Let $\overline{\G_C}$ be the zero-sum game obtained from $\G_C$ by deleting an edge $s\xrightarrow{a} s'$ if and only if (1) $s\in P_{\overline{C}}$ and $ \sigma_I(a) = 0$, where $I$ is the information set of $s$, or (2) $s\in P_0$, the information set of $s$ occurs on $\rho$, but has \emph{not} taken action $a$ there.
	Assign payoff vectors in the same fashion as in the previous cases. Clearly, $\overline{\G_C}$ can be constructed in linear time, and $C$ satisfies property ($\C$) with respect to $\sigma$ and $\rho$ \emph{iff} the payoff for $C$ in a Nash equilibrium of $\overline{\G_C}$ is 1. 
	\qed
\end{proof}

A straightforward consequence of \Cref{thm:complexity} is that deciding whether $\resp_{\G, \t}(i) >0$ belongs to NP, and computing $\resp_{\G, \t}(i) $ belongs to \#P.

\section{Examples}\label{sec:examples}

We now illustrate our notion of responsibility allocation on several examples in the moral
philosophy literature (cf.\ \cite{HalpernBook}).

\begin{example}[Bystanders]\label{ex:bystanders}
	Consider a car accident with three victims in immediate need of first aid. There are four bystanders $1,2,3,4$ who arrive in this order and who can help one of the victims (we implicitly assume that if one victim is already being helped, and another bystander decides to help, then she helps an unassisted victim). Let $E$ be the event where at least one victim dies. Then a coalition is $\f$-responsible if it contains exactly three players, and we get $\resp_{\G, \f}(i) = 1/4$, which just means that they have globally the same power. Consider the play where bystanders 1 and 3 help and the others do not help. Then the only $\s$-responsible coalition is $\{4\}$. Hence $\resp_{\G,\s} (4) = 1$, which is a numerical interpretation of the fact that 4 \emph{knows} that she can (or must) make the difference for the remaining victim to survive, while player 2 does not have this certainty (one can argue that she can hope to rely on the others). 
	
	Next consider the scenario in which 1 helps, but the others do not help irrespective of any previous actions. Then $\resp_{\G,\c}(1) = 0$, $\resp_{\G,\c}(i)  = 1/3$ for $i=2,3,4$, which explains that the remaining bystanders are equally responsible for the death of the victims. Finally imagine that bystander 4 would have helped if bystander 3 had helped (the \emph{bystander effect}), but 2 and 3 never help. In this scenario we get $\resp_{\G,\c}(1) = 0, \resp_{\G,\c}(2) = \resp_{\G,\c}(4) = 1/6$, and $\resp_{\G,\c}(3)=2/3$ which puts more causal responsibility on 3's shoulders for not being an exemplar to 4.
\end{example}

\begin{example}[Democratic voting]
In a democratic, secret vote between two oppenents $A$ and $B$, the absolute majority wins. There are $n$ (for simplicity let $n$ be odd) voters who can vote $A$ or $B$. Let $E$ be the event that $A$ looses the vote. A set $C$ is $\f$-responsible if and only if $|C|= (n+1)/2$, and due to the secrecy of the vote, the same condition characterizes $\s$-responsible coalitions based on any play in which $A$ looses. Thus $\resp_{\G, \f}(i) = \resp_{\G,\s}(i) = 1/n$ for all voters. Now imagine a scenario in which $A$ wins by a $4-7$ vote (so $n=11$). Then $C$ is $\c$-responsible if and only if $|C| =2$ and the two voters voted for $B$; for if they switch to $A$, then $A$ wins. Hence $\resp_{\G,\c}(i) = 1/4$ for each voter $i$ who voted for $B$.
\end{example}

\begin{example}[Marksmen]
 	Ten marksmen form a firing squad for the execution of a prisoner. They know that exactly one of them has a live bullet in his rifle, but they do not know which one. All ten have (at least theoretically) the choice of firing or not. This is modeled as a game where $\nature$ first chooses the player $i^*$ that has the live bullet and then the marksmen concurrently shoot or not. Let $E$ be the event that the prisoner dies, and consider an $E$-play $\rho$ (i.e., $i^*$ shoots). Due the uncertainty introduced by $\nature$, only the coalition consisting of all marksmen is $\f$-responsible, respectively, $\s$-responsible for $E$ based on $\rho$, and hence $\resp_{\G, \f}(i) = \resp_{\G,\s}(i) = 1/10$. However, we always have $\resp_{\G,\c}(i^*) = 1$ independent of the strategies of the other nine marksmen -- i.e. the one with the live bullet carries the entire causal responsibility.
\end{example}

\begin{example}[Bogus prevention]\label{ex:bogus}
	A person $P$ is protected by a bodyguard $B$, who suspects a poisonous attack and puts an antidote into $P$'s coffee. Indeed, an assassin $A$ tries to poison $P$, but the poison is neutralized by the antidote. Consider the event $E$ that $P$ survives. Based on the play $\rho$ and strategies described above, no single player is $\f$- or $\s$-responsible, but $B$ is $\c$-responsible, while $A$ is not.
	
	Now consider the alternative model in which $A$, seeing $B$'s action, poisons the coffee exactly if $B$ chose to put in the antidote, which determines a strategy $\sigma_A$ (e.g., because $A$ and $B$ are old friends, and $A$ wants $B$ to be considered important by $P$). Suppose that $B$ puts in the antidote, thus determining a strategy $\sigma_B$ (inducing the same path $\rho$ as above). Based on these strategies no individual player is $\c$-responsible for $P$'s survivial! We emphasize this point since all versions of Halpern-Pearl's theory \emph{do} consider $B$ putting in the antidote an actual cause of $P$'s survival, even though the structural equations determined by this model and context already preclude $P$'s death.
\end{example}

\begin{example}[Prisoner's death]
	Consider the scenario that a prisoner dies if guard 1 loads a gun, and guard 2 shoots this gun, or guard 3 shoots an already loaded gun. Guard 2 does not know whether guard 1 loaded the gun, but guard 3 is aware of all previous decisions. There is a straightforward encoding of this in terms of a three-layered game. Let $E$ be the event that the prisoner dies. Then $\{1,3\}$ and $\{2,3\}$ are minimal coalitions that are $\f$-responsible, and hence $\resp_{\G, \f}(1) = \resp_{\G, \f}(2) = 1/6$ and $\resp_{\G, \f}(3) =2/3$.
	
	Let $\rho$ be the play, where 1 loads the gun, guard 2 does not shoot, but guard 3 shoots. Based on this play, guard 3 forms the only minimal $\s$-responsible for $E$, meaning that $\resp_{\G,\s}(3) = 1$ and formalizing that guard 3 could have prevented the prisoner's death if he had wanted to. Also, if player 3 shoots no matter what guards 1 and 2 do, then there are no other $\c$-responsible coalitions. However, if guard 3 would not shoot if guard 1 does \emph{not} load the gun and guard 2 does \emph{not} shoot (e.g., because he is of inferior rank and only tried to impress the others by 'finishing the job' which either of them inadvertently failed to accomplish), then guard 1 is also $\c$-responsible for the prisoner's death based on $\rho$ and the above strategies. Hence $\resp_{\G,\c}(1) = \resp_{\G,\c}(3) = 1/2$ which has a clear interpretation: Even though 3 shoots, he did so incited by $1$'s loading the gun.
\end{example}

\section{Comparison to other approaches}\label{sec: other work}

We now provide a detailed comparison of our notions with related approaches to responsibility allocation.

\subsection{Causal models}\label{sub:AC}

An influential formal concept of causality is Halpern and Pearl's notion of \emph{actual causes} in \emph{causal} (often also called \emph{structural}) 
models \cite{HalpernP04}. 
A causal model $\M$ is a tuple consisting of a set $\U$ of exogeneous variables, a set $\V$ of endogeneous variable, a function $\R$ mapping each variable $X\in \U\cup \V$ to a finite set $\R(X)$ of possible values, and a collection of functions $\F = (F_X)$ which associates to each $X\in \V$ a function $F_X\colon \prod_{U\in \U} \R(U) \times \prod_{Y\in \V\setminus X} \R(Y) \to \R(X)$ that explains the dependency among the variables. We make the standard assumption that the causal model is \emph{recursive} meaning that there is a total order $\prec$ on $\V$ such that $X\prec Y$ implies that $F_X$ is independent of $Y$. A \emph{context} for $\M$ is a tuple $\vec{u}$ specifying values for the variables in $\U$, which unambiguously induces values for $\V$.

A \emph{primitive event} is a formula of the form $X = x$ for some $X\in \V$ and $x\in\R(X)$, and an \emph{event} is a boolean combination of primitive events. A \emph{causal formula} is of the form $[\vec{Y} \leftarrow \vec{y}]\phi$, where $\vec{Y}\subseteq \V, \vec{y}\in \prod_{Y\in\vec{Y}} \R(Y)$, and $\phi$ is an event. The formula $[\vec{Y} \leftarrow \vec{y}](X=x)$ is true in a causal model $\M$ with context $\vec{u}$ if, roughly speaking, setting the variables $\U$ to $\vec{u}$ and the variables in $\vec{Y}$ to $\vec{y}$ makes $X$ have value $x$ in the unique solution vector of the remaining functions in $\F$. This is lifted to general causal formulas in the obvious way and written as $(\M,\vec{u})\models [\vec{Y} \leftarrow \vec{y}]\phi$. The act of forcing $\vec{Y}$ to have values $\vec{y}$ independent of the prescribed dependencies is called an \emph{intervention}.

\begin{definition}[Actual causes]\label{def:AC}
	We say that $\vec{X} =\vec{x}$ is an \emph{actual cause} of $\phi$ in $(\M,\vec{u})$ if the following three conditions hold:
	\begin{enumerate}
		\item[(AC1)] $(\M,\vec{u})\models (\vec{X}=\vec{x})$ and $(\M,\vec{u})\models \phi$;
		\item[(AC2)] There is a set $\vec{W}\subseteq\V$ and a setting $\vec{x}'$ for $\vec{X}$ such that if \linebreak $(\M,\vec{u})\models (\vec{W}=\vec{w})$, then 
		\[(\M,\vec{u})\models [\vec{X} \leftarrow \vec{x}', \vec{W}\leftarrow \vec{w}]\neg\phi\]
		\item[(AC3)] $\vec{X}$ is minimal with respect to these properties.
	\end{enumerate}
	An actual cause is a \emph{but-for cause} if the set $W$ appearing in (AC2) can be taken to be empty.
\end{definition}

In the remainder of this section we argue that causal models can naturally be encoded as extensive form games, and that our notion of $\c$-responsibility captures precisely but-for causes. 
	Let $\M = (\U,\V,\R, \F)$ be a causal model, let $X_1\prec X_2... \prec X_n$ be a total order on $\V$ as induced by $\F$. Then the game $\G(\M)$ is defined as follows: The players are precisely the variables $\V$, the states are $k$-tuples $(x_1,...,x_k) \in \prod_{i=1}^k \R(X_i)$ for $1\leq k\leq n$. There is an edge $(x_1,...,x_k)\to(x_1',...,x'_{k+1})$ if and only if $x_i=x_i'$ for all $1\leq i\leq k$. Player $X_k$ is in control of the states consisting of $(k-1)$-tuples, and $\Act((x_1,...,x_{k-1})) = \R(X_k)$. Choosing value $x\in \R(X_k)$ moves the token to the state $ (x_1,...,x_{k-1},x)$. All information sets are singeltons.
	
	A play in $\G(\M)$ corresponds to a setting $\vec{v}$ for $\V$, restrictions of which to $\vec{X}\subseteq\V$ will be denoted by $\vec{x}$. For each primitive event $X=x$ we consider the set $E(X=x)$ of terminal vertices of $\G(\M)$ that lie at the end of a play on which $X$ has chosen the action $x$. This is lifted to events $\phi$ in the obvious way, written as $E(\phi)$. 
	
	Each context $\vec{u}$ for $\M$ induces values $\vec{v}$ for $\V$, and a strategy $\sigma_{X_k}^{\vec{u}}$ for each player $X_k$ by setting $\sigma_{X_k}^{\vec{u}}((x_1,..., x_{k-1})) = F_{X_k}(\vec{u}, x_1,...,x_{k-1})$. The strategy induced from these on a coalition $C\subseteq \V$ is denoted by $\sigma_C^{\vec{u}}$. Since causal models are not subject to random behavior, there exists a unique play $\rho_{\vec{u}}$ in $\G(\M)$ that is consistent with the strategy profile $\sigma^{\vec{u}}$.

\begin{restatable}{theorem}{encodingHP}\label{thm:HP}
	For a causal model $\M$ with context $\vec{u}$, $\vec{X}\subseteq \V$ is $\c$-responsible for $E(\phi)$ based on $\rho_{\vec{u}}$ and $\sigma^{\vec{u}}$ 
	in $\G(\M)$ iff $\vec{X}=\vec{x}$ is a but-for cause for $\phi$ in $(\M, \vec{u})$.
\end{restatable}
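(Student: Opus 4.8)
The plan is to set up a dictionary between pure strategies of the coalition $\vec{X}$ in the induced game $\G(\M)_{\vec{X}}$, played against the fixed opponent strategy $\sigma_{\overline{\vec{X}}}^{\vec{u}}$, and interventions of the form $[\vec{X}\leftarrow\vec{x}']$ on $\M$, and then to match the existential part of property ($\C$) from \Cref{def:causal backward} with (AC2) of \Cref{def:AC} in the special case $\vec{W}=\emptyset$, and the two minimality clauses with one another. First I would record some harmless simplifications. Since $\G(\M)$ has no $\nature$-states, the clause in \Cref{def:causal backward} about consistency with $\nature$'s choices is vacuous, and a pure profile $(\sigma_{\vec{X}}',\sigma_{\overline{\vec{X}}}^{\vec{u}})$ determines a single play; moreover, if a behavioral strategy of $\vec{X}$ forces every consistent play to end in $\neg E(\phi)$, then so does any pure strategy in its support, so in testing ($\C$) we may assume $\sigma_{\vec{X}}'$ pure. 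Because all information sets of $\G(\M)$ are singletons, their refinements in $\G(\M)_{\vec{X}}$ are singletons too, so a pure coalition strategy simply assigns a value in $\R(X_k)$ to each state $(x_1,\dots,x_{k-1})$ controlled by a player $X_k\in\vec{X}$.

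The heart of the argument is the following correspondence. Given a pure $\sigma_{\vec{X}}'$, let $\rho'$ be the unique play of $(\sigma_{\vec{X}}',\sigma_{\overline{\vec{X}}}^{\vec{u}})$ and let $\vec{x}'$ be the tuple of values that $\vec{X}$ takes along $\rho'$. I claim $\rho'$ realizes exactly the unique solution of $\M$ under context $\vec{u}$ and intervention $[\vec{X}\leftarrow\vec{x}']$. This follows by induction along the order $X_1\prec\dots\prec X_n$, which is also the order of play: assuming the first $k-1$ coordinates of $\rho'$ agree with those of the intervened solution, the value of $X_k$ agrees as well, because if $X_k\in\vec{X}$ both equal $x_k'$, and if $X_k\in\overline{\vec{X}}$ both equal $F_{X_k}(\vec{u},x_1,\dots,x_{k-1})$ — on the game side by definition of $\sigma_{\overline{\vec{X}}}^{\vec{u}}$, and on the model side because recursiveness allows $F_{X_k}$ to be evaluated on the common prefix. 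Hence $\rho'$ ends in $\neg E(\phi)$ if and only if $(\M,\vec{u})\models[\vec{X}\leftarrow\vec{x}']\neg\phi$. Note that against the \emph{fixed} opponent any history-dependence of the coalition's choices collapses, along $\rho'$, to the constant tuple $\vec{x}'$; conversely every setting $\vec{x}'$ is realized by the corresponding constant strategy. Thus the assignment $\sigma_{\vec{X}}'\mapsto\vec{x}'$ ranges over all interventions on $\vec{X}$ and preserves the event ``the outcome satisfies $\neg\phi$''.

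Assembling these observations, property ($\C$) for $\vec{X}$ — the existence of a coalition strategy after which all plays in $\Plays^{\sigma_{\vec{X}}',\sigma_{\overline{\vec{X}}}^{\vec{u}}}(\G(\M)_{\vec{X}})$ carry the label $\neg E(\phi)$ — holds precisely when some setting $\vec{x}'$ satisfies $(\M,\vec{u})\models[\vec{X}\leftarrow\vec{x}']\neg\phi$, which is (AC2) with $\vec{W}=\emptyset$. Condition (AC1) is automatic under the standing hypotheses: $(\M,\vec{u})\models\phi$ because $\rho_{\vec{u}}$ is an $E$-play, which is a prerequisite for speaking of $\c$-responsibility at all, and $(\M,\vec{u})\models(\vec{X}=\vec{x})$ because $\vec{x}$ is by definition the restriction to $\vec{X}$ of the actual solution along $\rho_{\vec{u}}$. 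The same remarks apply verbatim to every subset of $\vec{X}$ with its actual values, so (AC1) holds throughout and minimality is governed by (AC2) alone, that is, by ($\C$). Hence minimality of $\vec{X}$ with respect to ($\C$) coincides with (AC3), and the stated equivalence follows in both directions simultaneously.

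The step I expect to demand the most care is the correspondence claim of the second paragraph: one must verify that the opponent's structural-equation strategy, together with the recursiveness of $\M$ and the coincidence of the play order with $\prec$, genuinely reproduces the intervened solution, and that the apparent extra freedom of history-dependent coalition strategies evaporates once the opponent is held fixed. Everything else — the pure/behavioral reduction, the vacuity of the $\nature$-clause, and the bookkeeping of (AC1) across subsets — is routine.
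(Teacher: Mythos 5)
Your proposal is correct and follows essentially the same route as the paper's proof: reduce to pure coalition strategies, identify the unique play consistent with $(\sigma_{\vec{X}}',\sigma_{\overline{\vec{X}}}^{\vec{u}})$ with the solution of $\M$ under the intervention $[\vec{X}\leftarrow\vec{x}']$, thereby matching property ($\C$) with (AC2) for $\vec{W}=\emptyset$, and dispatch the two minimality clauses against each other. Your explicit induction along $\prec$ and the remark that history-dependence collapses against the fixed opponent merely spell out details the paper leaves implicit.
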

\begin{proof}
	We show that property $(\C)$ (see \Cref{def:causal backward}) for the mentioned instance of $\c$-responsibility is equivalent to (AC1) and (AC2) with $\vec{W}=\emptyset$ in $(\M, \vec{u})$. This entails the additional minimality statement required for $\c$-responsibility by a straightforward ping-pong argument involving (AC3).
	
	We begin with the forward direction. Clearly (AC1) is satisfied by the definition of $\vec{x}$ (as part of $\vec{v}$) and since $\rho$ is an $E(\phi)$-path. Since $\vec{X}$ is $\c$-responsible for $E(\phi)$ based on $\rho_{\vec{u}}$ and $\sigma^{\vec{u}}$, there exists an alternative strategy $\sigma'$ for $C$ in $\G(\M)_C$ such that all plays in $\Plays^{\sigma',\sigma_{\overline{C}}^{\vec{u}}}(\G(\M)_C)$ have label $\neg E(\phi)$. Without loss of generality we can assume that $\sigma'$ is a pure strategy as we otherwise pick in each state $s$ some action $a$ with $\sigma'_s(a)> 0$ and define a pure strategy $\sigma''$ by $\sigma''_s(a) =1$. This process singles out a unique path $\rho'\in\Plays^{\sigma',\sigma_{\overline{C}}^{\vec{u}}}(\G(\M)_C)$ that must necessarily have label $\neg E(\phi)$. Let $\vec{v}'$ be the actions taken by the set of players $\V$ on $\rho'$, and denote its restriction to $\vec{X}$ by $\vec{x}'$. Since $\sigma_{\overline{C}}^{\vec{u}}$ is induced by the functions $F_Y$ for $Y\in\V\setminus\vec{X}$, we have $(\M, \vec{u})\models [\vec{X}\leftarrow\vec{x}'](\V = \vec{v}')$. Since $\rho'$ ends in $\neg E(\phi)$ this means that $\phi$ becomes a false statement when tested against the vector $\vec{v}'$. This implies that $(\M, \vec{u})\models [\vec{X}\leftarrow\vec{x}']\neg\phi$, which shows (AC2) with $\vec{W} = \emptyset$. 
	
	For the backward direction, assume that $\vec{X} = \vec{x}$ satisfies (AC1) and (AC2) with $\vec{W}=\emptyset$. We intend to show that $\vec{X}$ is $\c$-responsible for $E(\phi)$ in $\G(\M)$ based on $\rho_{\vec{u}}$ and $\sigma^{\vec{u}}$. Clearly, $\rho_{\vec{u}}$ is an $E(\phi)$-path since $(\M,\vec{u})\models \phi$ by (AC1). Let $\vec{x}'$ be the vector garantueed to exist by (AC2) with $\vec{W}=\emptyset$, i.e., $(\M,\vec{u})\models [\vec{X} \leftarrow \vec{x}']\neg\phi$. Consider the strategy $\sigma'$ of player $C$ in $\G(\M)_C$ that forces the variables in $\vec{X}$ to take the values $\vec{x}'$, i.e., if  $s$ is a state that in $\G(\M)$ is in control of player $X\in\vec{X}$, then $\sigma'_s(y) = 1$ if and only if $y = x'$. Then  $\Plays^{\sigma',\sigma_{\overline{C}}^{\vec{u}}}(\G(\M)_C)$ contains a single path $\rho'$ as the strategy profile is pure, and we claim that this path $\rho'$ has label $\neg E(\phi)$. Since $\sigma_{\overline{C}}^{\vec{u}}$ is induced from the structural equations, the action taken by $Y\in \V\setminus \vec{X}$ on $\rho'$ is the unique value $y$ such that $(\M, \vec{u})\models [\vec{X}\leftarrow\vec{x}'](Y=y)$. As $(\M,\vec{u})\models [\vec{X} \leftarrow \vec{x}']\neg\phi$, the path $\rho'$ has $\neg E(\phi)$, as desired. This concludes the proof that $\vec{X}$ is $\c$-responsible.
	\qed
\end{proof}

It is arguably a feature of Halpern and Pearl's framework that it can ascribe actual causality to a larger class of events than but-for causes, e.g., it can distinguish the two players in the Suzy-Billie rock-throwing example from the introduction.
It is noteworthy, though, that in this case (and many others, cf.\ \cite{Halpern15}) the structural model approach distinguishes Suzy as the 
sole actual cause based on \emph{auxiliary variables} that are \emph{not} actions of independent players, but rather intermediate consequences of these actions. 
In the rock-throwing model the witness is Billy's rock not hitting the bottle rather than Billy's throwing. 
But the fact that Billy's rock did not hit the bottle is not a decision made by Billy---it just happened on the basis of previous actions 
as dictated by the structural equations. 
Auxiliary variables do not entail \emph{agency} nor the possibility to \emph{act} otherwise (compare the introduction).  
Our intention is to locate responsibility exclusively within and against (the actions of) autonomous players and we believe our notion
is more appropriate in ascribing responsibility.

\subsection{Degree of responsibility and blame}

Chockler and Halpern assign to actual causes (in the original definition \cite{HalpernP04}) a \emph{degree of responsibility} which tries to measure how crucial the cause is to the effect \cite{ChocklerH04}. It is essentially defined as the inverse of the size of a minimal set of interventions such that swapping the cause's value flips the truth value of the effect. It is noteworthy that the degree of responsibility is designed to be comparable across models, while the responsibility value measures the responsibility of each player against the other players within a given model.

Translated to our approach, the degree of $\t$-responsibility of a player would be the inverse of the size of the smallest $\t$-responsible coalition that she belongs to. In our running example, all three players would have degree of $\f$-responsibility $1/2$, since $\{1,3\}$ and $\{2,3\}$ are $\f$-responsible. However, the fact that $3$ belongs to both is ignored. Our responsibility value, on the other hand, takes \emph{all}  $\t$-responsible coalitions into account in which the player participates. This captures the idea that belonging to many responsible coalitions makes the player less dependent on others and hence more powerful.

The blame is defined as the expected degree of responsibility subject to a given probability distribution over the set of contexts. One can envision a refined notion of responsibility value in the same spirit, where the random player $\nature$ chooses a context first, and the Shapley value is taken with respect to the responsibility in each branch.

\subsection{Concurrent game structures}

The work whose intention is closest to ours is \cite{YazdanpanahDJAL19}, where notions of forward and backward responsibility are defined in the context of concurrent epistemic game structures. However, there are several differences as outlined below.  

First, they do not consider a causal notion of backward responsibility, and their notion of backward responsibility does not take the epistemic state appropriately into account: An alternative strategy of a responsible coalition is required to bring about an alternative outcome only from a single state on the play in question, not from \emph{all} states in the same information set as outlined below. For example, in their Figure 2, the second player $E_2$ is backward responsible based on the play $q_0q_1q_4$ (he can avoid the red states from $q_1$), but not based on the play $ q_0q_2q_6$, even though he acted in \emph{exactly} the same way based on \emph{exactly} the same knowledge.

Second, their notion of backward responsibility is asymmetric: In a game with two players choosing simultaneously and independently from the same set of actions based on the same knowledge it can happen that they perform identical actions, but one is backward responsible and the other one is not. This makes sense when one examines causation (which we do in our Definition 7), but not in their strategic setting. This phenomenon already becomes apparent in the matching pennies example of \Cref{ex:epistemic state}. If the play under consideration is $\rho = s_0s_1s_4$, then the definition of \cite[Definition 3.1]{YazdanpanahDJAL19} makes player 2 backward responsible, but not player 1. In our framework, both player 1 and player 2 are $\c$-, but not $\s$-responsible.

Third, while \cite[Theorem 3.3]{YazdanpanahDJAL19} states that backward responsibility is equivalent to forward responsibility from \emph{all states} on the given play, we proved in our Theorem 1 that forward responsibility is equivalent to strategic backward responsibility on \emph{all plays}. This is an important shift of perspective that has---to the best of our knowledge---not been proved before for similar notions of responsibility. In fact, it responds to an important philosophical question---left open in prior work---that 
only \cite{NaumovT20} slightly touches with its \emph{knowledge ex post} modality: 
the relation between \emph{general} (or type-level) responsibility and \emph{specific} (or token-level) responsibility. 

Fourth, in contrast to our polynomial time result, the complexity of checking backward responsibility in the sense \cite{YazdanpanahDJAL19} of a coalition  against ATL formulas is $\Delta_2^P$-complete.

\subsection{Proof-theoretic approaches}

The work \cite{Broersen11} as well as the series of papers \cite{NaumovT20c,NaumovT20b,NaumovT20} provide proof-theoretic approaches to responsibility (there called \emph{blameworthiness}, a term we avoid as it typically involves normative features). They define modal logics with various forms of
responsibility modalities and sound and complete axiomatizations for the logical systems. 
Instead, we take an operational, model-theoretic approach.
Our intention is to formulate notions of responsibility in a game-theoretic framework that allows \emph{operational} 
modeling of various natural features (temporality, strategy, information) in an obvious way. 
In other words, we regard the simplicity of extensive form games as their major strength, which in particular renders them prone to be used in interdisciplinary research with experts from cognition and philosophy.

Moreover, the computational complexity of their formalizations is left open. Given the semantics of blameworthiness of \cite{NaumovT20}, second-order blameworthiness of \cite{NaumovT20b}, or defender's blameworthiness of \cite{NaumovT20c}, one can expect at least PSPACE-hardness (and often undecidability)
for checking the blameworthiness (of a coalition, resp., a defender) in these formalisms.
Known complexity results about (original or Chellas's) STIT formulae also point to PSPACE-hardness for the notions of \cite{Broersen11}.	
This contrasts to our \emph{polynomial time} algorithm for checking all notions of responsibility. 
The aforementioned papers do not consider an analogue of our \emph{forward responsibility} notion, and they do not provide a \emph{quantitative} version of responsibility assigned to \emph{individuals} or a way to compare responsibilities (see, e.g., our \Cref{ex:bystanders}).
This is a shortcoming in terms of the applicability of their formalisms since in both societal and legal terms we usually ascribe responsibility 
to individuals. 

\section{Conclusion}

We have argued that extensive form games provide a conceptually convenient formal framework for responsibility ascription with just the right trade-off between expressiveness and tractability. 
We have defined qualitative (coalitional) and quantitative (individual) responsibility notions.
Through a set of examples, we demonstrate that our notions capture intuitive responsibility ascription
in many subtle examples.

%  whose expressive power has been exemplified in a number of scenarios and whose computational complexity has been determined. 
% In future work, we intend to find refined quantitative approaches to responsibility in the setup where the game does not only admit binary labels, but is equipped with payoff vectors. 

\bibliographystyle{splncs04}
\bibliography{references}

\begin{thebibliography}{10}
\providecommand{\url}[1]{\texttt{#1}}
\providecommand{\urlprefix}{URL }
\providecommand{\doi}[1]{https://doi.org/#1}

\bibitem{AleksandrowiczCHI14}
Aleksandrowicz, G., Chockler, H., Halpern, J.Y., Ivrii, A.: {The Computational
  Complexity of Structure-Based Causality}. In: Proceedings of the
  Twenty-Eighth {AAAI} Conference on Artificial Intelligence. pp. 974--980.
  {AAAI} Press (2014)

\bibitem{BeerBCOT09}
Beer, I., Ben-David, S., Chockler, H., Orni, A., Trefler, R.: Explaining
  counterexamples using causality. In: Computer Aided Verification. pp.
  94--108. Springer Berlin Heidelberg (2009).
  \doi{10.1007/978-3-642-02658-4\_11}

\bibitem{BrahamVanHees2012}
Braham, M., van Hees, M.: {An Anatomy of Moral Responsibility}. Mind
  \textbf{121 (483)},  601--634 (2012)

\bibitem{BrahamVanHees2018}
Braham, M., van Hees, M.: {Voids or Fragmentation: Moral Responsibility For
  Collective Outcomes}. The Economic Journal  \textbf{128 (602)},  F95--F113
  (2018)

\bibitem{Broersen11}
Broersen, J.: Deontic epistemic stit logic distinguishing modes of mens rea.
  Journal of Applied Logic  \textbf{9}(2),  137--152 (2011).
  \doi{10.1016/j.jal.2010.06.002}

\bibitem{BullingD13}
Bulling, N., Dastani, M.: {Coalitional Responsibility in Strategic Settings}.
  In: Computational Logic in Multi-Agent Systems. pp. 172--189. Springer Berlin
  Heidelberg, Berlin, Heidelberg (2013)

\bibitem{DBLP:journals/corr/Chockler16}
Chockler, H.: {Causality and Responsibility for Formal Verification and
  Beyond}. In: Proceedings of the First Workshop on Causal Reasoning for
  Embedded and safety-critical Systems Technologies, CREST. {EPTCS}, vol.~224,
  pp.~1--8 (2016)

\bibitem{ChocklerH04}
Chockler, H., Halpern, J.Y.: {Responsibility and Blame: A Structural-Model
  Approach}. J. Artif. Int. Res.  \textbf{22}(1),  93–115 (Oct 2004)

\bibitem{ChocklerHK2008}
Chockler, H., Halpern, J.Y., Kupferman, O.: What causes a system to satisfy a
  specification? {ACM} Transactions on Computational Logic  \textbf{9}(3),
  20:1--20:26 (2008)

\bibitem{DBLP:conf/aaai/FriedenbergH19}
Friedenberg, M., Halpern, J.Y.: {Blameworthiness in Multi-Agent Settings}. In:
  The 33rd {AAAI} Conf. on Artificial Intelligence, {AAAI} 2019. pp. 525--532.
  {AAAI} Press (2019)

\bibitem{Halpern15}
Halpern, J.Y.: {A Modification of the Halpern-Pearl Definition of Causality}.
  In: Proceedings of the 24th International Conference on Artificial
  Intelligence. p. 3022–3033. IJCAI'15, AAAI Press (2015)

\bibitem{HalpernP04}
Halpern, J.Y., Pearl, J.: {Causes and Explanations: A Structural-Model
  Approach. Part I: Causes}. The British Journal for the Philosophy of Science
  \textbf{56}(4),  843--887 (2005)

\bibitem{HalpernBook}
Halpern, J.: Actual Causality. MIT Press (2016)

\bibitem{HartHonore1959}
Hart, H., Honor\'{e}, A.: Causation in the law. Oxford University Press (1959)

\bibitem{KollerM92}
Koller, D., Megiddo, N.: The complexity of two-person zero-sum games in
  extensive form. Games and Economic Behavior  \textbf{4}(4),  528 -- 552
  (1992). \doi{https://doi.org/10.1016/0899-8256(92)90035-Q}

\bibitem{Kuhn}
Kuhn, H.W.: Extensive Games and the Problem of Information, pp. 193--216.
  Princeton University Press (1953). \doi{doi:10.1515/9781400881970-012}

\bibitem{LL17}
Lundberg, S.M., Lee, S.I.: A unified approach to interpreting model
  predictions. In: Proceedings of the 31st International Conference on Neural
  Information Processing Systems. p. 4768–4777. NIPS'17, Curran Associates
  Inc. (2017)

\bibitem{Nash50}
Nash, J.F.: {Equilibrium points in $n$-person games}. Proceedings of the
  National Academy of Sciences  \textbf{36}(1),  48--49 (1950).
  \doi{10.1073/pnas.36.1.48}

\bibitem{NaumovT20c}
Naumov, P., Tao, J.: {Blameworthiness in Security Games}. Proceedings of the
  AAAI Conference on Artificial Intelligence  \textbf{34}(03),  2934--2941 (Apr
  2020). \doi{10.1609/aaai.v34i03.5685}

\bibitem{NaumovT20b}
Naumov, P., Tao, J.: {Duty to Warn in Strategic Games}. In: Proceedings of the
  19th International Conference on Autonomous Agents and MultiAgent Systems. p.
  904–912. AAMAS '20 (2020)

\bibitem{NaumovT20}
Naumov, P., Tao, J.: An epistemic logic of blameworthiness. Artificial
  Intelligence  \textbf{283},  103269 (2020).
  \doi{10.1016/j.artint.2020.103269}

\bibitem{vonNeumann1928}
von Neumann, J.: {Zur Theorie der Gesellschaftsspiele}. Mathematische Annalen
  \textbf{100},  295--320 (1928)

\bibitem{Owen}
Owen, G.: Game Theory. Academic Press (1995)

\bibitem{vandePoel11}
van~de Poel, I.: {The Relation Between Forward-Looking and Backward-Looking
  Responsibility}. In: Moral Responsibility: Beyond Free Will and Determinism,
  pp. 37--52. Springer Netherlands, Dordrecht (2011).
  \doi{10.1007/978-94-007-1878-4\_3}

\bibitem{Scanlon98}
Scanlon, T.: What we owe to each other. Cambridge University Press (1998)

\bibitem{Shapley1953}
Shapley, L.S.: {A value for $n$-person games}. In: Contributions to the Theory
  of Games. Vol. II. pp. 307--317. Princeton University Press (1953)

\bibitem{Stengel96}
{von Stengel}, B.: Efficient computation of behavior strategies. Games and
  Economic Behavior  \textbf{14}(2),  220 -- 246 (1996).
  \doi{https://doi.org/10.1006/game.1996.0050}

\bibitem{StrumbeljK14}
\v{S}trumbelj, E., Kononenko, I.: {Explaining Prediction Models and Individual
  Predictions with Feature Contributions}. Knowl. Inf. Syst.  \textbf{41}(3),
  647--665 (Dec 2014). \doi{10.1007/s10115-013-0679-x}

\bibitem{YazdanpanahD16}
Yazdanpanah, V., Dastani, M.: {Distant Group Responsibility in Multi-agent
  Systems}. In: PRIMA 2016: Principles and Practice of Multi-Agent Systems. pp.
  261--278. Springer International Publishing, Cham (2016)

\bibitem{YazdanpanahDJAL19}
Yazdanpanah, V., Dastani, M., Jamroga, W., Alechina, N., Logan, B.: {Strategic
  Responsibility Under Imperfect Information}. In: The 18th Intern. Conf. on
  Autonomous Agents and MultiAgent Systems. p. 592–600. AAMAS '19 (2019)

\end{thebibliography}

\end{document}